\newcommand{\legalfor}[4]{{\cal C}^{#1}_{#2}(#3,#4)}
\renewcommand{\>}{\rangle}
\newcommand{\agents}{{\cal A}}
\newcommand{\likes}[2]{l(j,i)}
\newcommand{\pprofile}[2]{P(#1,#2)}
\newcommand{\strats}[3]{strat^{#1}_{#2}(#3)}
\newcommand{\votep}[1]{P(#1)}
\newcommand{\kvotep}[4]{P^{#1}_{#2}(#3,#4)}
\newcommand{\bact}[4]{{\sf Act}^{#1}_{#2}(#3,#4)}
\newcommand{\restr}[2]{{#1} \upharpoonright {#2}}
\newcommand{\act}{\mathbb{A}}
\newcommand{\noo}[1]{}
\newcommand{\dcom}[1]{\langle #1 \rangle}
\newcommand{\acro}[1]{\textsc{#1}}
\newcommand\ie{i.e{.\ }}
\newcommand{\catld}[1]{\langle\!\langle #1\rangle\!\rangle}
\newcommand{\catlb}[1]{[\![ #1 ]\!]}
\newcommand{\lnchatl}{\mathcal{L}_{\textsc{nchatl}}}
\renewcommand{\phi}{\varphi}
\title{Big, but not unruly: Tractable norms for anonymous game structures}
\author{Truls Pedersen\inst{1} \and Sjur Dyrkolbotn\inst{2} \and Piotr Kaźmierczak\inst{1,3}\thanks{Piotr Kaźmierczak's research was supported by the Research Council of Norway project 194521 (FORMGRID).}}
\institute{Dept. of Information Science and Media Studies, University of Bergen, Norway \and Durham School of Law, Durham University, UK \and Dept. of Computing, Mathematics and Physics, Bergen University College, Norway \email{truls.pedersen@infomedia.uib.no, s.k.dyrkolbotn@durham.ac.uk, phk@hib.no}}
\begin{document}

\maketitle

\begin{abstract}
  We present a new strategic logic $\acro{nchatl}$ that allows for
  reasoning about norm compliance on concurrent game structures that
  satisfy anonymity. We represent such game structures compactly,
  avoiding models that have exponential size in the number of
  agents. Then we show that model checking can be done in polynomial
  time with respect to this compact representation, even for normative
  systems that are not anonymous. That is, as long as the underlying
  game structures are anonymous, model checking normative formulas is
  tractable even if norms can prescribe different sets of forbidden
  actions to different agents.
\end{abstract}

\section{Introduction}
\label{sec:intro}

Logics of strategic ability such as Alternating-time Temporal Logic
(\acro{atl}) \cite{alur2002alternating} or Coalition Logic
\cite{Pau0202-0} have gained much interest in the multi-agent systems
community in recent years. The language of \acro{atl} (of which
Coalition Logic is the next-time fragment) allows for expressing
formulas about strategic ability of (coalitions of) agents, and it has
been used for modelling open multi-agent systems
\cite{Alur1998}. Originally, \acro{atl} was used for modelling
\emph{heterogeneous} systems. Recently, however, a semantics for
\acro{atl} tailored towards systems exhibiting some degree of
\emph{homogeneity} was presented \cite{rcgs,PedDyr13-0}.

In this paper, we continue this line of research, noting that the
\emph{homogeneity} requirement relied on in \cite{rcgs} has been
studied independently in game theory, where \emph{anonymity} is the
name given to a corresponding property of a normal form game, which is
obtained when payoff functions remain invariant under permutations of
players, see e.g., \cite{Blo00-0,Pap07-0,BraFisHol09-0}.

We tackle the question of regaining some of the expressive power lost
by requiring anonymity, and we do so by using \emph{normative
  systems}. These have emerged as a promising and powerful framework
for coordinating multi-agent systems
\cite{shoham:92a,shoham:96a,raey,AgovanRod07-0,DBLP:conf/atal/Dellunde07,AgoHoeWoo09-0}. They
allow the modeller to constrain the behaviour of agents, and can thus
provide a way to ensure that the global behaviour of the system
exhibits some desirable properties. We point out that normative
systems the way we understand them are sometimes also called
\emph{social laws}, and are simply behavioural restrictions on agents
developed by an offline designer (who is not part of the model),
much in the spirit of Shoham \& Tenneholtz's seminal
paper~\cite{shoham:96a}, and thus different from normative systems
known from deontic logic literature, since we abstract away from things
like obligations, institutions, etc. 

A key issue is the question of \emph{compliance}. Even if a normative
system is \emph{effective} in the sense that it will ensure that the
objective holds, under the assumption that all agents comply with it,
interesting questions and increased expressive power arise when one
assumes that only \emph{some} agents comply. Despite the anonymous
settings it is not irrelevant \emph{who} those agents are, in
particular, and as a consequence, a normative system provides us with
a way to regain expressive power that is lost by imposing
anonymity. We also show that doing so for anonymous game structures is
possible while maintaining the compact representation and the
tractable model checking that comes with it.

In short, our contribution in this paper combines the following four
different themes: strategic logic \acro{atl}, normative systems,
homogeneous structures and anonymous games. The resulting \emph{Norm
  Compliance Homogeneous Alternating-time Temporal Logic}
(\acro{nchatl}), in particular, arises from adding norms to
\emph{homogeneous} \acro{atl} \cite{PedDyr13-0} in a way that renders
the resulting model checking problem polynomial in the number of
agents.

The structure of the paper is as follows. In Section
\ref{sec:background} we introduce the formal background, recalling the
definition of concurrent game structures (\acro{cgs}s) and the
definition of anonymity used in game theory and social choice
theory. We also present a special case of the construction used in
\cite{rcgs}, showing how an anonymous \acro{cgs} can be succinctly
represented as a concurrent game structure with roles (an \acro{rcgs})
where the number of roles is exactly one. We go on to formulate the
notions of \emph{norms} and \emph{norm compliance} as they are used in
the multi-agent systems community. Then in Section~\ref{sec:tractable}
we define a semantics for \acro{nchatl} and investigate model checking
for this logic, showing that it is tractable. We conclude in
Section~\ref{sec:concl}.

\section{Formal Background}
\label{sec:background}

We start by introducing some definitions of the formal framework used
in the paper. The logical language we use, $\lnchatl$, is based on
\acro{atl} \cite{alur2002alternating}, extended with one extra
operator that we use to express norm compliance. Formally, the
language is generated by the following \acro{bnf}:
\[
\phi ::= \top\ |\ p\ |\ \neg \phi\ |\ \phi \lor \phi\ |\
\catld{C}\bigcirc\phi\ |\ \catld{C}\Box\phi\ |\
\catld{C}\phi\,\mathcal U\phi\ |\ \dcom{C}\phi
\]
where $p$ is a~propositional symbol, and $C$ is a coalition of agents.

The language of \acro{nchatl} contains three types of modalities:
\begin{itemize}
\item $\bigcirc$, $\Box$ and $\mathcal{U}$ are standard
  \emph{temporal} operators known from many temporal logics, and stand
  for ``next state'', ``some future state'' and ``until'',
  respectively;
\item $\catld{C}$ is a \emph{strategic ability} operator, and its
  intuitive meaning is that the coalition $\catld{C}\bigcirc\phi$ has
  a \emph{joint strategy} for enforcing a formula $\phi$ in the next
  state;
\item finally $\dcom{C}$ is the \emph{norm compliance} operator, which
  intuitive reading is that the coalition $C$ has a strategy to
  achieve $\phi$ if all its members comply to a given normative
  system.
\end{itemize}

\subsection{Anonymity}
\label{sec:anon}

We now define \emph{concurrent game structures} known from
\cite{alur2002alternating} and used for \acro{atl} interpretation, and
formalize the anonymity requirement mentioned in the introduction. We
then define the compact representation of such structures, which
provides the backbone for the semantics of \acro{nchatl}.

\begin{definition}[Concurrent Game Structure]
  \label{def:cgs}
  A \acro{cgs} is a tuple $S = \langle \agents, Q, \Pi,$ $\pi, \act,
  \delta\rangle$ where:
  \begin{itemize}
  \item $\agents$ is a non-empty set of players. In this text we
    assume $\agents = [n]$ for some $n \in \mathbb{N}$, and we reserve
    $n$ to mean the number of agents.\footnote{For the sake of
      brevity, we use the notation $[n]$ to indicate the set of
      numbers $1 \leq i \leq n$.}
  \item $Q$ is the non-empty set of states.
  \item $\Pi$ is a set of propositional letters and $\pi: Q \to
    \wp(\Pi)$ maps each state to the set of propositions true in it.
  \item $\act : Q \times \agents \to \mathbb N^+$ is the number of
    available actions in a given state. We also say that for each
    state $q \in Q$ a \emph{move vector} is a tuple $\langle
    \alpha_1,\ldots,\alpha_k\rangle$ s.t. $1 \leq \alpha_a \leq \act_a
    (q)$ for each $a \in \agents$. $D$ is then a \emph{move function}
    which given a state $q \in Q$ outputs a set of move vectors.
  \item For each $q\in Q$ and a move vector $\langle
    \alpha_1,\ldots,\alpha_k\rangle \in D(q)$ a \emph{transition
      function} produces a state $\delta(q,
    \alpha_1,\ldots,\alpha_k)\in Q$ which is a successor of $q$ when
    every agent $a \in \{1,\ldots,k\}$ chooses $\alpha_a$.
  \end{itemize}
\end{definition}

Inspired by the corresponding notion from game theory
\cite{DasPap07-0,Blo00-0,BraFisHol09-0}, we will say that a \acro{cgs}
$S$ is \emph{anonymous} if and only if:
\begin{align*}
  &  \forall_{q\in Q, i,j\in\agents}, \act(q,i) = \act(q,j) \text{ and }\\
  & \forall_{q\in Q, i,k\in\agents},
  \delta(q,\ldots,\alpha_i,\ldots,\alpha_k,\ldots) =
  \delta(q,\ldots,\alpha_k,\ldots,\alpha_i,\ldots)
\end{align*}

Any anonymous \acro{cgs} can be represented compactly as an
\acro{rcgs} -- a Concurrent Game Structure with Roles \cite{rcgs}. In
fact, the class of anonymous \acro{cgs}'s corresponds to the class of
\acro{rcgs}'s with a single role, a simplified definition of which can
be given as follows \cite{PedDyr13-0}.

\begin{definition} 
  A \acro{1rcgs} is a tuple $R = \langle \agents, Q, \Pi, \pi, \act,
  \delta \rangle$ where
  \begin{itemize}
  \item $\agents$, $Q$, $\Pi$, and $\pi$ are defined as in
    Definition~\ref{def:cgs},
  \item $\act : Q \to \mathbb N^+$ is the number of available actions
    in a given state.
  \item For every state we have a set of vectors $\votep q = \{F \in
    [n]^{[\act_q]} ~|~ \sum_{i \leq \act_q} F_i = n\}$. We will refer
    to the elements of $\votep q$ as the \emph{profiles} at $q$. For
    every state $q$ and every such profile $F \in \votep q$ we have a
    \emph{successor state} $\delta(q, F) = q'$.
  \end{itemize}
\end{definition}

The profiles assign a natural number to each action such that the sum
of these numbers (over all actions) sums up to the number of agents
$n$. The intended meaning is that the profile describes how many
agents perform each action. We also define partial profiles at $q \in
Q$, for all $A \subseteq \agents$ as follows:
$$
\pprofile q A = \left\{~F \in [n]^{[\act_q]}\ ~\left|~ \sum_{i \leq
      \act_q} F_i = |A|\right.~\right\}.
$$

It is not hard to see that a \acro{1rcgs} can be given to provide a
succinct representation for any \acro{cgs} which satisfies the
anonymity requirement; as the permutations of the action profiles are
irrelevant, we only need to record \emph{how many} agents performed
each action.

\subsection{Normative systems}
\label{sec:norms}

Following \cite{AgoHoeWoo09-0,raey} we define a normative system as a
map $\eta: Q \times \agents \to 2^{\mathbb N^+}$, giving, for each
state and agent, the set of actions that are forbidden for that agent
in that state. We require $\eta(q,a) \in [\act_q]$ for all $q \in Q,a
\in \agents$ and that $\eta$ is such that, for every state, there is
at least \emph{some} legal action. That is, $\forall q \in Q, a \in
\agents: [ \act_q(a) ] \setminus \eta (q,a) \neq \emptyset$.

To account for ``disobedience'' of certain agents (\ie those that do
\emph{not} comply with a normative system), we consider normative
systems \emph{restricted} to specific coalitions. Such a restriction
means that only actions that are controlled by a given coalition are
blacklisted (intuitively, anyone not belonging to that coalition is
free not to comply with the normative system). We use the
$\upharpoonright$ symbol to denote such restrictions, and formally
define it below:

\[
(\eta \upharpoonright C) (q, a) = \begin{cases}
  \eta(q, a) & \text{if}\ a\in C\\
  \emptyset & \text{otherwise.}
\end{cases}
\]

Notice that normative systems are not anonymous. It would certainly
also be possible to consider anonymous norms, \ie norms that are
invariant under agents' names and simply forbid actions at
states. However, since our main result is that non-anonymous norms are
tractable on anonymous structures, we will not pay much attention to
this special case in this paper. We show however, that for the example
from Section \ref{sec:ex}, moving from anonymous to non-anonymous
norms gives us increased expressive power.

\section{Tractable norms for anonymous game structures}
\label{sec:tractable}

The semantic structures we will use in this section are defined as
follows, following \cite{rcgs}.
\begin{definition}
  A normative \acro{1rcgs} is a pair $H = \langle R, \eta \rangle$
  where:
  \begin{itemize}
  \item $R$ is a \acro{1rcgs}, and
  \item $\eta$ is a normative system for $R$.
  \end{itemize}
\end{definition}

In addition to the notions introduced for \acro{1rcgs}s, we also need
access to the partial profiles which the agents in some coalition $B$
can choose, assuming that agents in some other coalition $A$ comply to
$\eta$. The straightforward way of defining such profiles is to go via
an explicit representation of compliant action-tuples for $B$, defined
as follows.

\begin{definition}\label{def:bact}
  Given a \acro{1rcgs} $H$, a state $q$ in $H$ and two coalitions $A,B
  \subseteq \agents$, an $\restr \eta A$-compatible $B$-action at $q$
  is a vector $\rho : B \to \mathbb N^+$ such that:
$$\forall b \in B ~:~ \rho(b) \in [\act_q] \setminus (\restr \eta A)(q, b).$$
We let $\bact \eta A q B$ denote the set of all $\restr \eta
A$-compatible $B$-actions at $q$.
\end{definition}

Then $B$-actions give rise to $B$-profiles as follows.

\begin{definition}
  \label{def:cbp}
  If $\rho$ is an $\restr \eta A$-compatible $B$-action, then the
  corresponding $\restr \eta A$-compatible $B$-profile is a vector
  $s_B^{\restr \eta A} : [\act_q] \to \mathbb N$ such that:
$$\forall i \in [\act_q]:\left(s_B^{\restr \eta A}\right)(i) = \left|\{ b \in B ~|~ \rho(b) = i\}\right|. $$
We gather all $\restr \eta A$-compatible $B$-profiles for which there
is a corresponding $\restr \eta A$-compatible $B$-action at $q$ in the
set $\kvotep \eta A q B$.
\end{definition}

Notice that a direct computation of this set, using Definition
\ref{def:bact}, requires computing the set $\bact \eta A q B$, which
can have exponential size in the number of agents from $B$. This would
defeat the purpose of compact representation, the aim of which is to
ensure that complexity of model checking remains polynomial in the
number of agents as long as the number of actions is constant. It
turns out, however, that computation of $\bact \eta A q B$ can be
avoided for \emph{arbitrary} (non-anonymous) normative systems, and
that a polynomial-time procedure can be used instead. We return to
this challenge in Section~\ref{char}, after we have defined truth on
normative \acro{1rcgs} models.

To do this, we need some more notation. Given $F \in \kvotep \eta A q
B, G \in \kvotep \eta C q D$, we say that $F \geq G$ if $A = C$ and
for every $i \in [\act_q]$ we have $F_i \geq G_i$. Given two states
$q,q' \in Q$, we say that $q'$ is a \emph{successor} of $q$ if there
is some $F \in \votep q$ such that $\delta(q,F) = q'$. A
\emph{computation} is an infinite sequence $\lambda = q_0q_1\ldots$ of
states such that for all positions $i\geq 0$, $q_{i+1}$ is a successor
of $q_i$. We follow standard abbreviations, hence a $q$-computation
denotes a computation starting at $q$, and $\lambda[i]$,
$\lambda[0,i]$ and $\lambda[i,\infty]$ denote the $i$-th state, the
finite prefix $q_0q_1\ldots q_i$ and the infinite suffix
$q_iq_{i+1}\ldots$ of $\lambda$ for any computation
$\lambda$ and its position $i\geq 0$, respectively. \\

\begin{definition} An $\restr \eta A$-compatible $B$-strategy is a map
  $s_B: Q \to \bigcup_{q \in Q} \kvotep \eta A q B$ such that:
$$s_B(q) \in \kvotep \eta A q B \text{ for each } q \in Q.$$
We denote the set of all such strategies by $\strats \eta A B$.
\end{definition}

Notice that if $s \in \strats \eta A \agents$ for some $A \subseteq
\agents$, then if we apply $\delta(q)$ to $s(q)$ we obtain a unique
new state $q' = \delta(q, s(q))$. Iterating, we get the \emph{induced}
computation $\lambda_{s,q} = q_{0}q_{1}\ldots$ such that $q = q_0$ and
$\forall i~\geq 0: \delta(q_{i}, (s(q_i))) = q_{i+1}$. Given $s_B \in
\strats \eta A B$ and a state $q$ we get an associated \emph{set} of
computations $out(s_B,q)$. This is the set of all computations that
can result when at any state, $B$ is acting in the way specified by
$s_B$. That is,
\begin{equation}\label{eq:nout}
  out(s_B,q) ~:=~ \{\lambda_{s,q} \mid s \in \strats \eta A \agents \text{ and } s_B \leq s\}.
\end{equation}

We can now define normative satisfaction on \acro{1rcgs}'s as follows.
\begin{definition}\label{def:nhatl}
  Given a normative \acro{1rcgs} $(H, \eta)$, a state $q$ and a
  coalition $A \subseteq \agents$, truth of $\phi$ on $(H, \eta)$
  under $A$-compliance is defined inductively.
  \begin{itemize}
  \item $H, \eta, A, q \models p$ iff $q \in \pi(p)$
  \item $H, \eta, A, q \models \neg\phi$ iff $H, \eta, A, q
    \not\models \phi$
  \item $H, \eta, A, q \models \phi\lor\psi$ iff $H, \eta, A, q
    \models \phi$ or $H, \eta, A, q \models \psi$
  \item $H, \eta, A, q \models \catld{C}\!\bigcirc\!\phi$ iff $\exists
    s_C \in \strats \eta A C: \forall \lambda \in out(s_C, q):
    \lambda[1] \models \phi$
  \item $H, \eta, A, q \models \catld{C}\Box\phi$ iff $\exists s_C \in
    \strats \eta A C: \forall \lambda \in out(s_C, q):$\\ $\forall
    i\geq 0: \lambda[i] \models \phi$
  \item $H, \eta, A, q \models \catld{C}\phi\mathcal{U}\psi$ iff
    $\exists s_C \in \strats \eta A C: \forall \lambda \in out(s_C,
    q): \exists i \geq 0: (\lambda[i] \models \psi \land \forall j \in
    [i]: \lambda[j]\models \phi)$
  \item $H, \eta, A, q \models \dcom{B}\phi$ iff $H, \eta, B, q
    \models \phi$
  \end{itemize}
\end{definition}

Clearly, to solve the model checking problem for this logic, we need
to compute sets of the form $\kvotep \eta A q B$, and how to do this
efficiently is the main obstacle preventing a quick algorithm. We
address and resolve this challenge in Section~\ref{char}, but first we
consider an example.

\subsection{Example}\label{sec:ex}

For a simple illustration of the kind of reasoning we can perform
using norms on anonymous game structures let us assume we have a
system set up to perform two tasks, $p_1$ and $p_2$. Let us further
assume that the system contains agents $\agents = [n]$ where, for
simplicity, we assume $n$ is a multiple of $10$. Also, assume that
every agent must choose to contribute to either $p_1$ or $p_2$, a
choice we encode as a choice between shared actions $\alpha_{p_1}$ and
$\alpha_{p_2}$. If the task $p_1$ is successfully performed, $p_1$
becomes true in the next state, and similarly for the task $p_2$.

As it happens, our system is such that in order for $p_1$ to be
successfully performed we need $80 - 90 \%$ of the agents to
contribute towards $p_1$. That is, for $p_1$ to become true, such a
percentage of agents have to choose $\alpha_{p_1}$ as their action. On
the other hand, in order for $p_2$ to be true in the next state, we
need $20 - 60 \%$ of the agents to perform $\alpha_{p_2}$. In
Figure~\ref{fig:example-1} we depict an \acro{1rcgs} modelling such a
scenario.\footnote{The pairs used to decorate transitions denote
  profiles, with the first coordinate being the percentage of agents
  doing $\alpha_{p_1}$, and the second coordinate being those who do
  $\alpha_{p_2}$. We have $\delta(q_0,\<i,j\>) = q_{i,j}$ for all such
  tuples. We omit reflexive loops for all states $q_{i,j}$.}

Notice that if \emph{both} $p_1$ and $p_2$ are to be performed
successfully, we need precisely $20 \%$ of the agents to perform
$\alpha_{p_2}$ while the remaining $80 \%$ choose to do
$\alpha_{p_1}$. It follows that in order to successfully complete both
tasks, we need \emph{coordination}. In fact, as it stands, we need
everyone to coordinate their actions with everyone else. In terms of
\acro{atl}, since successful completion of both $p_1$ and $p_2$
results from a \emph{unique} profile, only the \emph{grand coalition}
can ensure $p \land q$. That is, while we have $H,q \models
\catld{\agents}\bigcirc (p \land q)$, we also have $H,q \models
\catlb{A}\bigcirc (\neg p \lor \neg q)$ for all $A \subset
\agents$.\footnote{$\catlb{}$ is the dual of the strategic ability
  operator $\catld{}$. Intuitively, $\catlb{A}\phi$ means that
  coalition $A$ can not avoid $\phi$. }

Moreover, notice that even if some coalition $A \subseteq \agents$ can
observe what the agents in $\agents \setminus A$ do, they might not
necessarily respond in such a way that $p \land q$ becomes true. To
see this, assume that $A$ contains $60 \%$ of the agents. Then if the
remaining agents all perform $\alpha_{p_2}$, it becomes impossible for
$A$ to respond in such a way that $p_1$ becomes true. We have, in
particular, $\catld{\agents \setminus A}\bigcirc \neg p$.

\begin{figure}[ht]

  \centering

  \begin{tikzpicture}[every node/.style={circle, draw, scale=1},
    scale=0.8, rotate=180]

    \tikzset{every node/.style={scale=1, thick}}

    \node (q0) at (0.0, -2.0) {$\begin{smallmatrix}q \\
        \emptyset\end{smallmatrix}$};

    \node (q0-100) at (4.5, 3.0) {$\begin{smallmatrix}q_{0,100} \\
        \emptyset\end{smallmatrix}$};

    \node (q10-90) at (3.3, 3.0) {$\begin{smallmatrix}q_{10,90} \\
        \{p_1\}\end{smallmatrix}$};

    \node (q20-80) at (2.1, 3.0) {$\begin{smallmatrix}q_{20,80} \\
        \{p_1,p_2\}\end{smallmatrix}$};

    \node (ddd1) at (0.4, 3.0) {$\dots$};

    \node (q60-40) at (-1.2, 3.0) {$\begin{smallmatrix}q_{60,40} \\
        \{p_2\}\end{smallmatrix}$};

    \node (ddd2) at (-2.9, 3.0) {$\dots$};

    \node (q100-0) at (-4.5, 3.0) {$\begin{smallmatrix}q_{100,0} \\
        \emptyset\end{smallmatrix}$};


    \tikzset{every node/.style={solid}};

    \tikzset{mystyle/.style={->,thick}};

    \path (q0) edge [mystyle, left] node {$\begin{smallmatrix}\langle
        0,100 \rangle\end{smallmatrix}$} (q0-100);

    \path (q0) edge [mystyle] node {} (q10-90);

    \path (q0) edge [mystyle, right] node {$\begin{smallmatrix}\langle
        20,80 \rangle\end{smallmatrix}$} (q20-80);

    \path (q0) edge [mystyle, below, right] node
    {$\begin{smallmatrix}\langle 60,40 \rangle\end{smallmatrix}$}
    (q60-40);

    \path (q0) edge [mystyle, right] node {$\begin{smallmatrix}\langle
        100,0 \rangle\end{smallmatrix}$} (q100-0);

    \draw [decorate,decoration={brace,amplitude=5pt},thick](3.5,3.4)
    -- (4.5,3.4) node [black,midway,yshift=-5pt,below] {\footnotesize
      $\emptyset$};

    \draw [decorate,decoration={brace,amplitude=5pt},thick](1.8,3.4)
    -- (3.5,3.4) node [black,midway,yshift=-5pt,below] {\footnotesize
      $p_1$};

    \draw [decorate,decoration={brace,amplitude=5pt},thick](-1.4,3.4)
    -- (2.4,3.4) node [black,midway,yshift=-5pt,below] {\footnotesize
      $p_2$};

    \draw [decorate,decoration={brace,amplitude=5pt},thick](-4.5,3.4)
    -- (-1.4,3.4) node [black,midway,yshift=-5pt,below] {\footnotesize
      $\emptyset$};

  \end{tikzpicture}

  \caption{A coordination problem resolved by norms}

  \label{fig:example-1}

\end{figure}
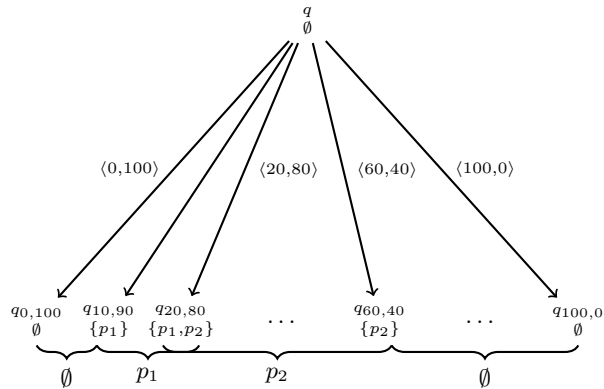

Suppose that we want to use norms to achieve $p \land q$ even under
the assumption that only those agents that are in $A$ are capable of
coordinating their actions. Clearly, this is possible. For instance,
if we simply demand that $\agents \setminus A$ all perform the same
action, and they comply, then, assuming the norm to be common
knowledge, $A$ can adapt accordingly. Somewhat more subtly, notice
that in order to ensure $\catlb{\agents \setminus A}\bigcirc(p\land
q)$ we do not require such a powerful norm. It is sufficient, in
particular, to fix some $B \subseteq \agents \setminus A$ containing
$20 \%$ of the agents, and introduce the norm $\eta$ defined by:
$$
\eta(q,a) = \begin{cases} \{\alpha_{p_2}\} \text{ if } q=q_0, a \in B
  \\ \emptyset \text{ otherwise. } \end{cases}
$$
As long as $B$ complies, $A$ can indeed achieve $p \land q$ \emph{as
  long as they observe} what the other agents do and adapt
accordingly.  In logical terms, we have $H,\eta,\emptyset,q_0 \models
\dcom B \catlb{\agents \setminus A}\bigcirc (p \land q)$. If it is not
obvious, we leave it to the reader to verify this, possibly by using
\texttt{mcheck} from Algorithm~\ref{alg:mcheck}.

The toy example considered here also serves to illustrate that
non-anonymous norms give increased expressive power compared to norms
that just forbid a set of actions. Consider, in particular, the
situation when we want to empower $A$ to \emph{choose} whether $p_1$
or $p_2$ is to become true, irrespectively of what the remaining
agents do. Using a non-anonymous norm, this can be achieved by
choosing $B' \in \agents$ containing $10 \%$ of the agents such that
$B' \cap A = B' \cap B = \emptyset$. To see this, consider the norm
$\eta'$ defined by:
$$
\eta'(q,a) = \begin{cases} \{\alpha_{p_2}\} \text{ if } q=q_0, a \in B
  \\ \{\alpha_{p_1}\} \text{ if } q = q_0,a \in B' \\ \emptyset \text{
    otherwise. } \end{cases}
$$
Then, as long as $B \cup B'$ comply, we have at least $10 \%$ doing
$\alpha_{p_2}$ and $20 \%$ doing $\alpha_{p_1}$, from which it follows
that $p_1$ is ensured as long as all members of $A$ perform
$\alpha_{p_1}$, while $p_2$ is ensured, for instance, if $50 \%$ of
the members in $A$ perform $\alpha_{p_2}$. We have, in particular,
$H,\eta',\emptyset,q \models \dcom {B \cup B'}( \catld A \bigcirc p
\land \catld A \bigcirc q)$. It is not hard to see that no anonymous
norm can achieve this, as long as only $30 \%$ of the agents are
assumed to comply with it.

\subsection{Characterizing $\restr \eta A$-compatible
  $B$-profiles}\label{char}

In this section, we will provide a characterization showing that quick
computation of the sets $\kvotep \eta A q B$ is indeed
possible. Towards this result, we first observe the following simple
fact, the proof of which is trivial and omitted.

Whenever we use the ``$+$'' symbol with respect to vectors, we mean
addition coordinate-wise.

\begin{proposition}\label{prop:plus}
  Given a \acro{1rcgs}, a normative system $\eta$, coalitions $A,B
  \subseteq \agents$ and a state $q \in Q$, we have $F \in \kvotep
  \eta A q B$ if, and only if,
$$\exists F_1 \in \pprofile q {B \setminus A} ,~\exists F_2 \in \kvotep \eta A q {A \cap B} ~\text{s.t.}~ F = F_1 + F_2.$$
\end{proposition}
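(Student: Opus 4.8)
The plan is to prove both directions of the biconditional by unpacking the definitions of $\kvotep \eta A q B$ and $\pprofile q {B \setminus A}$, exploiting the key observation that the restriction $\restr \eta A$ only blacklists actions for agents that are in $A$, so that agents in $B \setminus A$ are completely unconstrained. The central idea is that any $\restr \eta A$-compatible $B$-action $\rho$ decomposes into its restriction to $A \cap B$ and its restriction to $B \setminus A$, and that the induced profile $s_B^{\restr \eta A}$ is the coordinate-wise sum of the two sub-profiles, since each coordinate simply counts how many agents in each sub-coalition chose the corresponding action.

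For the forward direction, I would start with $F \in \kvotep \eta A q B$, which by Definition~\ref{def:cbp} means there is an $\restr \eta A$-compatible $B$-action $\rho : B \to \mathbb N^+$ whose induced profile equals $F$. I would then split $\rho$ into $\rho_1 = \restr \rho {B \setminus A}$ and $\rho_2 = \restr \rho {A \cap B}$. Taking $F_2$ to be the profile induced by $\rho_2$, I need to check that $F_2 \in \kvotep \eta A q {A \cap B}$: this holds because $\rho_2$ is itself $\restr \eta A$-compatible (the constraint $\rho(b) \in [\act_q] \setminus (\restr \eta A)(q,b)$ is inherited for every $b \in A \cap B$), and it is an $(A \cap B)$-action, so it witnesses membership. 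For $F_1$, the profile induced by $\rho_1$, I only need $F_1 \in \pprofile q {B \setminus A}$, which by definition merely requires the entries to be natural numbers summing to $|B \setminus A|$; this is automatic since $F_1$ counts the members of $B \setminus A$ distributed over $[\act_q]$. Finally $F = F_1 + F_2$ follows because counting the agents of $B$ choosing action $i$ equals counting those in $B \setminus A$ plus those in $A \cap B$, the two sets being disjoint with union $B$.

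For the backward direction, I would assume given $F_1 \in \pprofile q {B \setminus A}$ and $F_2 \in \kvotep \eta A q {A \cap B}$ with $F = F_1 + F_2$, and construct a single $\restr \eta A$-compatible $B$-action $\rho$ realizing $F$. The profile $F_2$ comes from some compatible $(A \cap B)$-action $\rho_2$, so I set $\rho = \rho_2$ on $A \cap B$. For $B \setminus A$, I need to produce \emph{any} action $\rho_1$ inducing $F_1$; since $F_1$ assigns to each $i \in [\act_q]$ a count $(F_1)_i$ summing to $|B \setminus A|$, I can assign $(F_1)_i$ of the agents in $B \setminus A$ to action $i$, and this is automatically $\restr \eta A$-compatible because $(\restr \eta A)(q,b) = \emptyset$ for $b \notin A$. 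Then $\rho$ is an $\restr \eta A$-compatible $B$-action whose induced profile is $F_1 + F_2 = F$, giving $F \in \kvotep \eta A q B$.

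The only point requiring genuine care is the backward direction's construction of $\rho_1$: one must confirm that an arbitrary profile $F_1 \in \pprofile q {B \setminus A}$ is always realizable by some concrete action on $B \setminus A$. This is exactly where the lack of constraints on agents outside $A$ matters, since a profile counting $|B \setminus A|$ agents over $[\act_q]$ can be realized by \emph{any} assignment with the right counts, and no forbidden-action condition can obstruct it. Everything else is bookkeeping about disjoint unions and coordinate-wise counting, which is why the paper declares the proof trivial and omits it.
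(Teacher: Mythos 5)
Your proof is correct and is exactly the argument the paper has in mind when it declares the proposition ``trivial and omitted'': decompose a compatible $B$-action over the disjoint union $B = (B \setminus A) \cup (A \cap B)$, note that compatibility is inherited on $A \cap B$, and use that $(\restr \eta A)(q,b) = \emptyset$ for $b \notin A$ to realize an arbitrary $F_1 \in \pprofile q {B \setminus A}$ by any assignment with the right counts. You correctly identified the one point needing care (realizability of $F_1$), so nothing is missing.
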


We will also need the following auxiliary function.

\begin{definition}\label{def:legalfor}
  Given an \acro{1rcgs} $H$, a normative system $\eta$ and any state
  $q \in Q$ we define, for all $E \subseteq \act_q,A \subseteq
  \agents$, the following set:
$$
\legalfor q \eta E A = |\{x \in A \mid \eta(q,x) \cap E \not =
\emptyset\}|.
$$ 
\end{definition}
So $\legalfor q \eta E A$ returns the \emph{number} of agents in $A$
that have a legal action in $E$ at $q$. Using this function allows us
to characterize $\kvotep \eta A q B$ more compactly using a matching
argument, giving rise to the following lemma, towards tractable model
checking.

\begin{lemma}\label{lemma:main}
  For any \acro{1rcgs}, any normative system $\eta$ and any $A,B
  \subseteq \agents$ we have $F \in \kvotep \eta A q B$ iff $F = F_1 +
  F_2$ for some $F_1 \in \pprofile q {B \setminus A}$ and some $F_2
  \in \pprofile q {A \cap B}$ such that:
  \begin{equation}\label{eq:allsubset}
    \forall E \subseteq [\act_q]: \legalfor q \eta E {A \cap B} \geq \sum_{i \in E}F_2(i).
  \end{equation}
\end{lemma}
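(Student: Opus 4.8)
The plan is to build on Proposition~\ref{prop:plus}, which already isolates the unconstrained contribution $F_1 \in \pprofile q {B \setminus A}$ coming from the agents outside $A$. Since $\restr \eta A$ forbids nothing to an agent in $B \setminus A$, every such $F_1$ is admissible, and the whole problem reduces to characterizing the constrained component: I must show that $F_2 \in \kvotep \eta A q {A \cap B}$ holds exactly when $F_2 \in \pprofile q {A \cap B}$ satisfies~(\ref{eq:allsubset}). Observe that for every agent $b \in A \cap B$ we have $(\restr \eta A)(q,b) = \eta(q,b)$, so an $\restr \eta A$-compatible $(A\cap B)$-action is simply an assignment $\rho$ sending each $b \in A \cap B$ to a legal action in $[\act_q]\setminus\eta(q,b)$, and $F_2$ is its induced count vector.

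First I would recast the existence of such a $\rho$ as a bipartite matching problem. Form a bipartite graph whose left vertices are the agents of $A \cap B$ and whose right vertices are \emph{action slots}: for each action $i \in [\act_q]$ create $F_2(i)$ indistinguishable copies, so that the right side has $\sum_{i} F_2(i) = |A \cap B|$ vertices. Join an agent $b$ to a slot of action $i$ precisely when $i$ is legal for $b$, \ie $i \in [\act_q]\setminus\eta(q,b)$. A count vector $F_2 \in \pprofile q {A \cap B}$ lies in $\kvotep \eta A q {A \cap B}$ if and only if this graph admits a matching saturating every slot: such a matching assigns to each slot of action $i$ a distinct agent who may legally perform $i$, and reading off each agent's assigned action yields a compatible $(A\cap B)$-action realizing exactly the counts $F_2$. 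Because the two sides have equal cardinality, a slot-saturating matching is automatically perfect, and conversely every compatible $\rho$ induces one, so the two formulations are equivalent.

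Next I would apply Hall's theorem in its deficiency (marriage) form to the slot side. A slot-saturating matching exists iff for every set $T$ of slots the neighbourhood bound $|N(T)| \geq |T|$ holds. The key simplification is that all $F_2(i)$ slots belonging to a single action $i$ share the same neighbourhood, namely the agents of $A \cap B$ for which $i$ is legal; hence, for a fixed neighbourhood, the bound is tightest when $T$ contains \emph{all} slots of each action it touches. It therefore suffices to range over sets $T$ of the form ``all slots of the actions in $E$'' for $E \subseteq [\act_q]$. For such a $T$ we have $|T| = \sum_{i \in E} F_2(i)$, while $N(T)$ is exactly the set of agents in $A \cap B$ possessing some legal action in $E$, so $|N(T)| = \legalfor q \eta E {A \cap B}$. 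Hall's condition thus collapses to precisely the family of inequalities~(\ref{eq:allsubset}), and combining this with Proposition~\ref{prop:plus} delivers the stated equivalence.

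The main obstacle, and the step deserving the most care, is this reduction from the exponentially many slot-subsets of Hall's theorem to the action-subsets $E \subseteq [\act_q]$ appearing in~(\ref{eq:allsubset}): one must verify that restricting to full bundles of slots loses no binding constraint and that, under this restriction, the neighbourhood count coincides with $\legalfor q \eta E {A \cap B}$. A secondary point is choosing the correct side on which to run Hall's theorem; applying it to the slot side rather than the agent side is what makes the neighbourhoods factor cleanly through the actions and produces the sum $\sum_{i\in E}F_2(i)$ on the right-hand side. Both directions of Hall's theorem are used, yielding both implications of the lemma simultaneously.
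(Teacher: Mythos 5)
Your proposal is correct and takes essentially the same route as the paper's proof: reduce via Proposition~\ref{prop:plus} to the constrained component $F_2$, build the bipartite graph of agents in $A \cap B$ versus action slots (with $F_2(i)$ copies of action $i$, legal-action edges), and apply Hall's marriage theorem, with the easy direction of Hall giving the forward implication and the hard direction giving the converse. The only difference is one of explicitness: you spell out the reduction from arbitrary slot subsets to full bundles of slots for action sets $E \subseteq [\act_q]$ (using that slots of the same action share a neighbourhood), a step the paper's proof leaves implicit when it asserts the Hall condition directly from~(\ref{eq:allsubset}).
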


\begin{proof}
  $\Rightarrow$) Trivial. \\
  $\Leftarrow$) Assume that we have $F = F_1 + F_2$ for $F_1 \in
  \pprofile q {B \setminus A}$ and $F_2 \in \pprofile q {A \cap B}$
  such that (\ref{eq:allsubset}) holds. We demonstrate existence of
  $\rho \in \bact \eta A q B$ that induces the profile $F_2$, i.e.,
  such that
$$F_2(i) = |\{ x \in A \cap B ~|~ \rho(x) = i\}| \text{ for all } i \in [\act_q].$$
We will think of $\rho$ as the solution of a matching problem in a
bipartite graph: Let $G = (V_1,V_2,E)$ where $V_1 = A \cap B$, $V_2 =
\{i_j \mid i \in [\act_q], j \in [F_2(i)]\}$ are the two sets of nodes
and $E = \{(x,i_j) \mid i \not \in \eta(q,x)\}$ is the set of edges.
Notice that $|V_1| = |V_2|$ since $F_2 \in \pprofile q {A \cap B}$,
and that the graph is indeed bipartite. For all subsets of $V
\subseteq V_2$, let $V^- = \{x \in V_1 \mid \exists i_j \in V_2:
(x,i_j) \in E\}$. Then, since $F_2$ satisfies (\ref{eq:allsubset}), it
follows that for all $V \subseteq V_2$ we have $|V^-| \geq V$. This
means that the conditions of Hall's marriage theorem are all fulfilled
(well known from graph theory, originally published in
\cite{Hal35-0}), meaning that there exists a set $E' \subseteq E$ such
that for every $i_j \in V_2$ there is a unique $x \in V_1$ such that
$(x, i_j) \in E'$, i.e., such that $E'$ is a matching in $G$. Let us
define the vector $\rho: A \cap B \to \mathbb N^+$ such that $\rho(x)
= i$ for all $(x,i_j) \in E'$. Clearly, since $E'$ is a matching, this
is well-defined and we have $\rho \in \bact \eta A q B$ as
desired. Moreover, it is easy to see that $\rho$ corresponds to $F_2$
in the sense of Definition~\ref{def:cbp}. We conclude that $F_2 \in
\kvotep \eta A q {B \cap A}$. Then, from Proposition~\ref{prop:plus}
it follows that $F = F_1 + F_2 \in \kvotep \eta A q B$, concluding the
proof.\qed
\end{proof}

In Figure~\ref{fig:lemma1} we illustrate how $\kvotep \eta A q B$ is
generated, by calculating the sets $\pprofile q {B \setminus A}$ and
$\pprofile q {B \cap A}$ the latter of which is then restricted to the
elements which satisfy Condition~(\ref{eq:allsubset}) (in
Lemma~\ref{lemma:main}). The parameters of the situation illustrated
is the number of actions $\act_q = 3$ and the set of agents $\agents =
\{a, b, c, d, e\}$ with agents $A = \{b, c, d\}$ complying to $\eta$,
and the agents for which we are making a set of profiles for are
contained in $B = \{c, d, e\}$. The consequence of $\eta$ for the two
agents in $B$ which do comply, is that it forbids action $2$ for agent
$c$ and actions $1$ and $2$ for agent $d$.

\begin{figure}[h]
  \def\svgwidth{\textwidth} 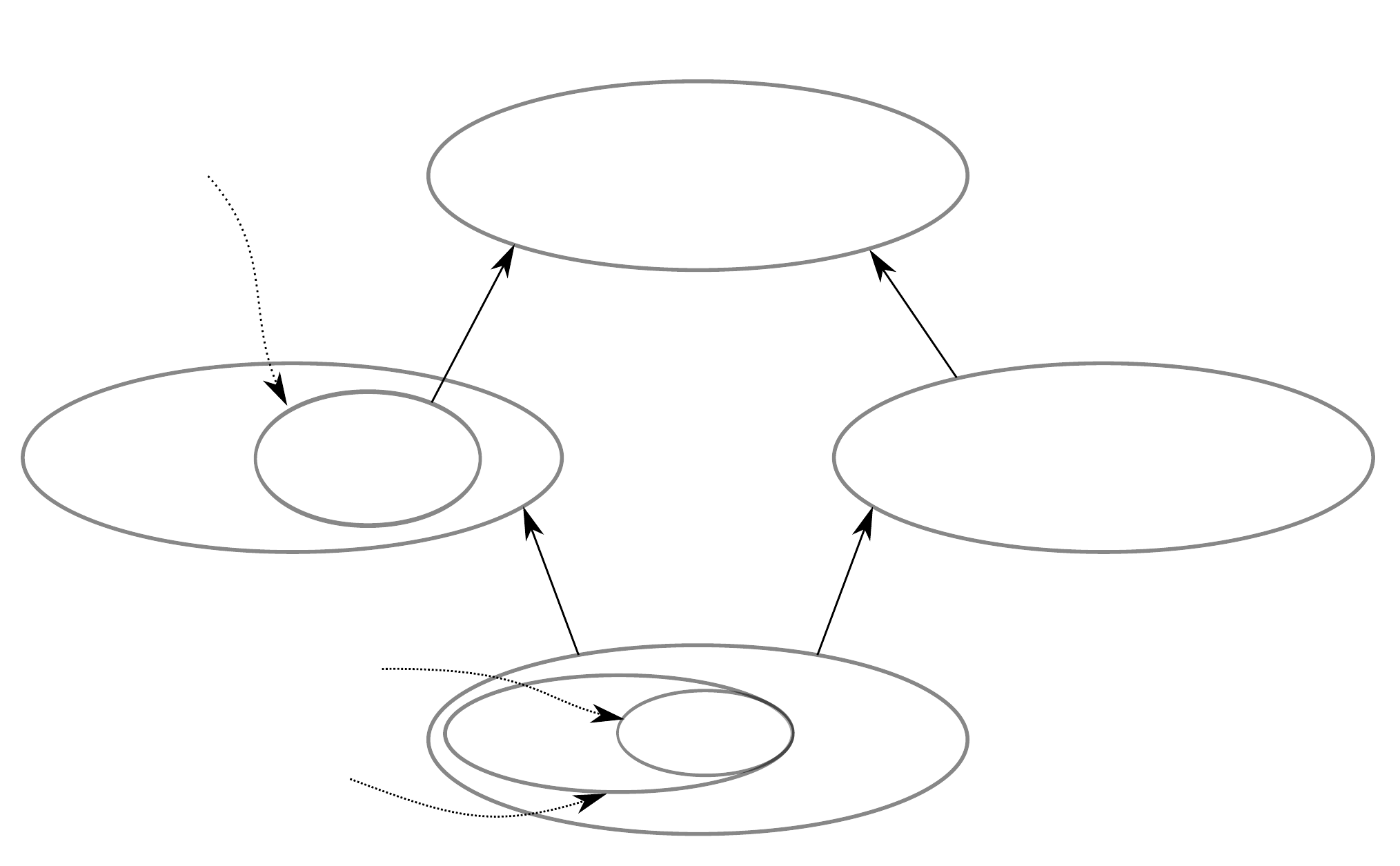
  \caption{Illustration of Lemma 1.}
  \label{fig:lemma1}
\end{figure}

In light of Lemma~\ref{lemma:main}, it is clearly possible, as long as
the number of actions is constant, to generate $\kvotep \eta A q B$ in
polynomial time for all $A,B,q$. We simply run through all $F \in
\pprofile q {A \cap B}$ and check if Condition~(\ref{eq:allsubset})
holds. This involves running though all subsets of $\act_q$, but still
it only requires a constant number of traversals of $A \cap B$. Then
the set $\kvotep \eta A q B$ is obtained from any such $F$ passing the
test, when added to any vector from the set $\pprofile q {B \setminus
  A}$, as detailed in Algorithm~\ref{alg:kvotep-full}.

We mention that the construction in the proof of
Lemma~\ref{lemma:main} mirrors the construction used in
\cite{BraFisHol09-0} to establish that finding pure Nash equilibria in
an anonymous normal form game is decidable in polynomial time provided
the number of actions remain constant. This result, in particular, is
also obtained by an application of Hall's marriage theorem.

More importantly, given an \acro{1rcgs} $H$, a normative system
$\eta$, a state $q \in Q$ and coalitions $A, B$, it seems clear that
we can define an anonymous normal form game such that $\kvotep \eta A
q B$ is the set of pure Nash equilibria in this game. We omit the
details due to space restrictions, but remark that as
Lemma~\ref{lemma:main} can be seen as a corollary of results from
\cite{BraFisHol09-0}, it follows that computing $\kvotep \eta A q B$
can also be done by employing the more subtle techniques introduced
there, used to prove membership in the complexity class $\sf
TC^0$. This means, in particular, that the algorithm presented in the
next section, while showing that model checking is tractable, could be
improved on this point. Here, however, we do not focus on the design
of optimal procedures, but on clearly conveying the main result and
the ideas that have precipitated it.


\subsection{Tractable model checking}\label{tract}

The algorithm for checking truth of $\phi$ in a normative \acro{1rcgs}
follows exactly the same pattern as the standard model checking
algorithm used to do model checking on \acro{cgs} models, see e.g.,
\cite{Jamroga:2009:EYH:1615285.1615287}. Given a \acro{cgs} model $S$
and a formula $\phi$, this algorithm processes $\phi$ recursively and
returns the set of states $q \in Q$ where $\phi$ is true. To deal
correctly with $\catld{A}\Box\phi$ and $\catld{A}\phi\mathcal{U}\psi$
the algorithm relies on the following fixed point characterizations,
which are well-known to hold for \acro{atl}, see for instance
\cite{Jamroga:2009:EYH:1615285.1615287}, and are also easily seen to
be true on any normative \acro{1rcgs} model, c.f.,
Definition~\ref{def:nhatl}:
\begin{gather}\label{eq:fixed}
  \begin{gathered}
    \catld{A}\Box\phi \leftrightarrow \phi \land
    \catld{A}\bigcirc\catld{A}\Box\phi\\
    \catld{A}\phi_{1}\mathcal{U}\phi_{2} \leftrightarrow \phi_{2}\lor
    (\phi_{1}\land\catld{A}\bigcirc\catld{A}\phi_{1}\mathcal{U}\phi_{2}
  \end{gathered}
\end{gather}

In light of this, the correctness of the algorithm \texttt{mcheck},
shown in Algorithm~\ref{alg:mcheck}, follows trivially if we can
establish correctness of the algorithm \texttt{enforce}, shown in
Algorithm~\ref{alg:enforce}.

\begin{algorithm}[h]
  \begin{algorithmic}
    \IF{$\phi = p \in \Pi$} \RETURN $\pi(p)$
    \ENDIF
    \IF {$\phi = \neg \psi$} \RETURN $Q \setminus
    \mathtt{mcheck}(H,\eta,A,\psi)$
    \ENDIF
    \IF {$\phi = \psi \vee \psi'$} \RETURN
    $\mathtt{mcheck}(H,\eta,A,\psi) \cup
    \mathtt{mcheck}(H,\eta,A,\psi')$
    \ENDIF
    \IF {$\phi = \catld{B}\bigcirc\psi$} \RETURN $\{q \mid
    \mathtt{enforce}(H,\eta,A,q,B, \mathtt{mcheck}(H,\eta,A,\psi))\}$
    \ENDIF
    \IF {$\phi = \catld{B}\Box\psi$} \STATE $Q_1 := Q$ \STATE $Q_2 :=
    \mathtt{mcheck}(H,\eta,A,\psi)$ \WHILE {$Q_1 \not \subseteq Q_2$}
    \STATE $Q_1 := Q_2$ \STATE $Q_2 := \{q \in Q \mid
    \mathtt{enforce}(H,\eta,A,B,q,Q_2)\} \cap Q_2$
    \ENDWHILE
    \RETURN $Q_1$
    \ENDIF
    \IF {$\phi = \catld{B}\psi\mathcal{U}\psi'$} \STATE $Q_1 :=
    \emptyset$ \STATE $Q_2 = mcheck(H,\psi)$ \STATE $Q_3 =
    mcheck(H,\psi')$ \WHILE{$Q_3 \not \subseteq Q_1$} \STATE $Q_1 :=
    Q_1 \cup Q_3$ \STATE $Q_3 := \{q \in Q \mid
    \mathtt{enforce}(H,\eta,A,B,q,Q_1)\} \cap Q_2$
    \ENDWHILE
    \RETURN $Q_3$
    \ENDIF
    \IF {$\phi = \dcom{A'}\psi$} \RETURN
    $\mathtt{mcheck}(H,\eta,A',\psi)$
    \ENDIF
  \end{algorithmic}
  \caption{\texttt{mcheck}$(H,\eta, A,\phi)$ algorithm}
  \label{alg:mcheck}
\end{algorithm}

\begin{algorithm}[h]
  \begin{algorithmic}
    \STATE $S_{pro} = \mathtt{comp}(\eta,A,q,B)$ // $S_{pro} = \kvotep
    \eta A q B$ \STATE $S_{ant} = \mathtt{comp}(\eta,A,q,B)$ //
    $S_{ant} = \kvotep \eta A q {(\agents \setminus B)}$ \FOR {$F_B
      \in S_{pro}$} \STATE $x = true$ \FOR {$F_{B'} \in S_{ant}$} \IF
    {$\delta_q(F_B + F_{B'}) \notin Q'$} \STATE $x = false$
    \ENDIF
    \ENDFOR
    \IF {$x = true$} \RETURN true
    \ENDIF
    \ENDFOR
    \RETURN false
  \end{algorithmic}
  \caption{\texttt{enforce}$(H, \eta,A q, B, Q')$ algorithm}
  \label{alg:enforce}
\end{algorithm}

This algorithm answers, given a normative \acro{1rcgs} $(H,\eta)$, a
state $q$, coalitions $A,B \subseteq \agents$ and a set of states
$Q'$, whether or not there is some strategy $s_B \in \strats \eta A B$
such that $\{\lambda[1] \in Q ~|~ \lambda \in out(s_B, q)\} \subseteq
Q'$. Clearly, such a strategy exists if, and only if, there is some
$F_B \in \kvotep \eta A q B$ such that for all $F \in \kvotep \eta A q
\agents$, if $F_A \leq F$ then $\delta(q,F) \in Q'$. Thus, correctness
of \texttt{enforce} follows if the algorithm \texttt{comp}, shown in
Algorithm~\ref{alg:kvotep-full}, correctly computes the necessary sets
$\kvotep \eta A q B$. This, in turn, clearly follows from
Lemma~\ref{lemma:main}. To see this, notice that the step when we
place agents in the set $T$ corresponds exactly to the calculation of
$\legalfor q \eta E A$.\footnote{This implementation of the $\kvotep
  \eta A q B$, collecting agents in $T$, could be optimized if we just
  count the first occurrence of a satisfying condition (where $a$ is
  added to $T$) and move on to the next agent. We use a set to
  simplify the presentation.}

Moreover, notice that all of the procedures involved in model checking
have polynomial complexity in the length of the formula and the size
of the model. This follows by the fact that the sizes of $S_{pro}$ and
$S_{ant}$, used by \texttt{enforce} and calculated by \texttt{comp},
have sizes bounded above by $\frac{(|B| + (|\act_q| -
  1))!}{|B|!(|\act_q| -1)!}$ and $\frac{(|\agents \setminus B| +
  (|\act_q| - 1))!}{|\agents \setminus B|!(|\act_q| -1)!}$
respectively.  These combinatorial expressions are both bounded above
by $|\agents|^{|\act_q|}$, so there is indeed no exponential
dependence on the number of agents, only on the number of
actions. Also remember that we compute $S_{pro}$ and $S_{ant}$
effectively, by applying Lemma~\ref{lemma:main}. The main result
follows.

\begin{theorem}\label{thm:main}
  Given a normative \acro{1rcgs} $(H,\eta)$, a state $q \in Q$, a
  coalition $A \subseteq \agents$ and a formula $\phi$: Deciding if
  $H,\eta,A,q \models \phi$ takes polynomial time in the size of $H$
  and the length of $\phi$.
\end{theorem}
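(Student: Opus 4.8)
The plan is to prove the theorem by establishing correctness and a polynomial running-time bound for the three procedures \texttt{mcheck}, \texttt{enforce}, and \texttt{comp}, arguing from the bottom up. The top-level algorithm \texttt{mcheck} follows the standard recursive \acro{atl} model-checking schema: it processes $\phi$ by structural recursion, treating the Boolean cases directly, reducing $\catld{B}\Box\psi$ and $\catld{B}\psi\mathcal{U}\psi'$ to the fixed-point iterations justified by (\ref{eq:fixed}), and reducing every one-step strategic obligation to the query answered by \texttt{enforce}. The one genuinely new clause is $\dcom{A'}\psi$, which by Definition~\ref{def:nhatl} merely replaces the compliance coalition and is therefore discharged by a single recursive call with $A'$ in place of $A$. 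Consequently, correctness of \texttt{mcheck} reduces, exactly as in the classical \acro{atl} argument, to correctness of \texttt{enforce}.

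First I would verify \texttt{enforce}. By Definition~\ref{def:nhatl}, coalition $B$ can force the next state into $Q'$ under $A$-compliance precisely when there is some $F_B \in \kvotep \eta A q B$ such that every full profile $F \in \kvotep \eta A q \agents$ with $F_B \leq F$ satisfies $\delta(q,F) \in Q'$. Since any such $F$ splits into $F_B$ together with a profile for $\agents \setminus B$, this condition is exactly what the double loop of \texttt{enforce} tests, provided the sets $S_{pro} = \kvotep \eta A q B$ and $S_{ant} = \kvotep \eta A q {(\agents \setminus B)}$ are produced correctly by \texttt{comp}. Correctness of \texttt{comp} is where the real work sits, and it is delivered by Lemma~\ref{lemma:main}: instead of enumerating the exponentially large $\bact \eta A q B$, \texttt{comp} runs through $\pprofile q {A \cap B}$, retains those profiles satisfying Condition~(\ref{eq:allsubset}) (the step populating $T$ computes the count $\legalfor q \eta E {A \cap B}$), and sums each survivor with every element of $\pprofile q {B \setminus A}$. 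By Lemma~\ref{lemma:main} the result is exactly $\kvotep \eta A q B$.

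It remains to bound the running time. The crucial point is that the profile sets $\pprofile q {A \cap B}$, $\pprofile q {B \setminus A}$, and hence $S_{pro}$ and $S_{ant}$, have sizes bounded by the combinatorial quantities given just before the theorem, each at most $|\agents|^{|\act_q|}$; because the explicit representation of $\delta$ in $H$ already ranges over all profiles at $q$, these sizes are polynomial in the size of $H$. Testing Condition~(\ref{eq:allsubset}) iterates over the $2^{|\act_q|}$ subsets of $[\act_q]$, again a factor present in the model size, each subset requiring only a constant number of passes through $A \cap B$. Hence \texttt{comp} runs in polynomial time and returns polynomially many profiles, \texttt{enforce} performs a polynomial number of sums, membership tests and transition lookups, and \texttt{mcheck} makes a number of recursive calls linear in the length of $\phi$, with each $\Box$- and $\mathcal{U}$-loop terminating after at most $|Q|$ iterations by monotonicity of the computed sets within $Q$. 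Multiplying these polynomial factors yields the stated bound.

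The main obstacle has in fact already been removed by Lemma~\ref{lemma:main}: the direct route to $\kvotep \eta A q B$ through $\bact \eta A q B$ is exponential in $|B|$, and it is precisely the Hall/matching argument packaged in that lemma which lets \texttt{comp} avoid this blow-up. Granting the lemma, everything that remains is the bookkeeping verification that the profile sets are polynomial in $H$ and that the structural recursion together with the fixed-point loops contributes only further polynomial factors.
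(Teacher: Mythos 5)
Your proposal is correct and follows essentially the same route as the paper's own argument: correctness of \texttt{mcheck} is reduced, via the fixed-point characterizations, to correctness of \texttt{enforce}, which in turn reduces to \texttt{comp} computing the sets $\kvotep \eta A q B$ correctly by Lemma~\ref{lemma:main}, with tractability following from the same combinatorial bounds (at most $|\agents|^{|\act_q|}$ profiles per set) on $S_{pro}$ and $S_{ant}$. If anything, you spell out slightly more than the paper does (the decomposition of a full profile into a $B$-part and an $(\agents \setminus B)$-part, and termination of the fixed-point loops), but the decomposition, key lemma, and complexity accounting are identical.
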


\begin{algorithm}[h]
  \begin{algorithmic}
    \STATE $R = \pprofile q {B\setminus A}$ \FOR {$x \in \pprofile q
      {B \cap A}$} \STATE $y = true$ \FOR {$E \subseteq \act_q$}
    \STATE $T = \emptyset$ \FOR {$a \in B \cap A$} \FOR {$\alpha \in
      E$} \IF {$\alpha \notin \eta(q, a)$} \STATE {add $a$ to $T$}
    \ENDIF
    \ENDFOR
    \ENDFOR
    \ENDFOR
    \IF {$|T| < \sum_{\alpha \in E} x_\alpha$ } \STATE $y = false$
    \ENDIF
    \IF {$y = true$} \STATE add $x$ to $R$
    \ENDIF
    \ENDFOR
    \RETURN $R$
  \end{algorithmic}
  \caption{\texttt{comp}$(\eta,A,q,B)$ algorithm including $T =
    \legalfor q \eta E A$}
  \label{alg:kvotep-full}
\end{algorithm}

\section{Conclusion}
\label{sec:concl}

In this paper, we have considered concurrent game structures that
satisfy anonymity. Following \cite{rcgs,PedDyr13-0}, we represent
these structures compactly, avoiding models that have exponential size
in the number of agents. Then we consider normative systems applied to
such models, resulting in the logic \acro{nchatl}. Our main technical
result is that this logic still admits a tractable algorithm for the
model checking problem.

More generally, we believe our work serves to establish interesting
connections, both conceptual and technical, between recent work in
algorithmic game theory and recent work on logics for strategic
ability of coalitions of agents. It seems, in particular, that a major
challenge which is becoming increasingly important to both these
fields is the need for compact representations, allowing us to make
use of established formalisms to analyse systems with a large number
of participating agents.

In order for this to become feasible in practice, we certainly require
representations and notions that avoid introducing exponential
time-dependence on the number of agents that are present. The danger,
however, is that when formulating restrictions that make this
possible, one deprives the underlying formalism of crucial expressive
power. In this paper, we have addressed this worry for \acro{atl}, and
shown that norms can be used to regain some of what is lost by
requiring anonymity.

Moreover, and somewhat surprisingly, it turns out that even
non-homogeneous norms can be implemented without introducing any
exponential dependence on the agents. We conclude, therefore, that
normative systems are a good candidate in general for giving compact
multi-agent formalism a limited, but useful, means for talking about
such heterogeneous properties that can be expressed without resulting
in an exponential blow-up of crucial decision problems.


\begin{thebibliography}{10}

\bibitem{AgovanRod07-0}
T.~{\AA}gotnes, W.~{van Der Hoek}, J.~A. Rodr\'{\i}guez-Aguilar, C.~Sierra, and
  M.~Wooldridge.
\newblock {On the Logic of Normative Systems}.
\newblock In {\em {Proc. of the 20th Int. Joint Conf. on Artificial
  Intelligence (IJCAI 07)}}, pages 1175--1180, 2007.

\bibitem{AgoHoeWoo09-0}
T.~{\AA}gotnes, W.~van~der Hoek, and M.~Wooldridge.
\newblock {Robust normative systems and a logic of norm compliance}.
\newblock {\em Logic Journal of the IGPL}, 18(1):4--30, 2009.

\bibitem{Alur1998}
R.~Alur, T.~Henzinger, F.~Mang, S.~Qadeer, S.~Rajamani, and S.~Tasiran.
\newblock Mocha: Modularity in model checking.
\newblock In A.~Hu and M.~Vardi, editors, {\em Computer Aided Verification},
  volume 1427 of {\em Lecture Notes in Computer Science}, pages 521--525.
  Springer Berlin Heidelberg, 1998.

\bibitem{alur2002alternating}
R.~Alur, T.~A. Henzinger, and O.~Kupferman.
\newblock Alternating-time temporal logic.
\newblock {\em Journal of the ACM (JACM)}, 49(5):672--713, 2002.

\bibitem{Blo00-0}
M.~Blonski.
\newblock Characterization of pure strategy equilibria in finite anonymous
  games.
\newblock {\em Journal of Mathematical Economics}, 34:225--233, 2000.

\bibitem{BraFisHol09-0}
F.~Brandt, F.~Fischer, and M.~Holzer.
\newblock Symmetries and the complexity of pure nash equilibrium.
\newblock {\em Journal of Computer and System Sciences}, 75(3):163--177, 2009.

\bibitem{DasPap07-0}
C.~Daskalakis and C.~Papadimitriou.
\newblock Computing equilibria in anonymous games.
\newblock In {\em Foundations of Computer Science, 2007. FOCS '07. 48th Annual
  IEEE Symposium on}, pages 83--93, 2007.

\bibitem{DBLP:conf/atal/Dellunde07}
P.~Dellunde.
\newblock On the multimodal logic of normative systems.
\newblock In J.~S. Sichman, J.~A. Padget, S.~Ossowski, and P.~Noriega, editors,
  {\em COIN}, volume 4870 of {\em Lecture Notes in Computer Science}, pages
  261--274. Springer, 2007.

\bibitem{Hal35-0}
P.~Hall.
\newblock On representatives of subsets.
\newblock {\em Journal of the London Mathematical Society}, s1-10(1):26--30,
  1935.

\bibitem{raey}
W.~Hoek, M.~Roberts, and M.~Wooldridge.
\newblock Social laws in alternating time: effectiveness, feasibility, and
  synthesis.
\newblock {\em Synthese}, 156(1):1--19, 2007.

\bibitem{Jamroga:2009:EYH:1615285.1615287}
W.~Jamroga.
\newblock Easy yet hard: Model checking strategies of agents.
\newblock In M.~Fisher, F.~Sadri, and M.~Thielscher, editors, {\em
  Computational Logic in Multi-Agent Systems}, pages 1--12. Springer-Verlag,
  Berlin, Heidelberg, 2009.

\bibitem{Pap07-0}
C.~H. Papadimitriou.
\newblock {The Complexsity of Finding Nash Equilibria}.
\newblock In N.~Nisan, T.~Roughgarden, E.~Tardos, and V.~V. Vazirani, editors,
  {\em {Algorithmic Game Theory}}, pages 29--52. Cambridge University Press,
  2007.

\bibitem{Pau0202-0}
M.~Pauly.
\newblock {A Modal Logic for Coalitional Power in Games}.
\newblock {\em Journal of Logic and Computation}, 12(1):149--166, Feb. 2002.

\bibitem{PedDyr13-0}
T.~Pedersen and S.~Dyrkolbotn.
\newblock {Agents homogeneous: A procedurally anonymous semantics
  characterizing the homogeneous fragment of ATL}.
\newblock To appear in the {\em {Proceedings of PRIMA 2013}}, LNAI, 2013.

\bibitem{rcgs}
T.~Pedersen, S.~Dyrkolbotn, P.~Ka\'zmierczak, and E.~Parmann.
\newblock Concurrent game structures with roles.
\newblock In F.~Mogavero, A.~Murano, and M.~Y. Vardi, editors, {\em {\rm
  Proceedings 1st International Workshop on} Strategic Reasoning, {\rm Rome,
  Italy, March 16-17, 2013}}, volume 112 of {\em Electronic Proceedings in
  Theoretical Computer Science}, pages 61--69. Open Publishing Association,
  2013.

\bibitem{shoham:92a}
Y.~Shoham and M.~Tennenholtz.
\newblock On the synthesis of useful social laws for artificial agent
  societies.
\newblock In {\em Proceedings of the tenth national conference on Artificial
  intelligence}, AAAI'92, pages 276--281. AAAI Press, 1992.

\bibitem{shoham:96a}
Y.~Shoham and M.~Tennenholtz.
\newblock On social laws for artificial agent societies: Off-line design.
\newblock {\em Artificial Intelligence}, 73:231--252, 1995.

\end{thebibliography}
\end{document}